\newtheorem{theorem}{Theorem}
\newtheorem{lemma}{Lemma}
\title{\LARGE \bf
Delocalized Epidemics on Graphs: A Maximum Entropy Approach
}
\author{Faryad Darabi Sahneh$^{1,*}$, Aram Vajdi$^1$, Caterina Scoglio$^1$\thanks{This material is based upon work supported by the National Science Foundation under Grant No. CIF-1423411.}\thanks{$^1$Sahneh, Vajdi, and Scoglio are with the Department of Electrical and Computer Engineering at Kansas State University,
Manhattan, KS, 66506. Emails: {faryad,avajdi,caterina}@ksu.edu.}
\\{\bf (In Proceedings of American Control Conference, 2016)}}
\begin{document}

\maketitle

\begin{abstract}
The susceptible--infected--susceptible (SIS) epidemic process on complex networks can show metastability, resembling an endemic equilibrium. In a general setting, the metastable state may involve a large portion of the network, or it can be localized on small subgraphs of the contact network. Localized infections are not interesting because a true outbreak concerns network--wide invasion of the contact graph rather than localized infection of certain sites within the contact network. Existing approaches to localization phenomenon suffer from a major drawback:  they fully rely on the steady--state solution of mean--field approximate models in the neighborhood of their phase transition point, where their approximation accuracy is worst; as statistical physics tells us. We propose a dispersion entropy measure that quantifies the localization of infections in a generic contact graph. Formulating a maximum entropy problem, we find an upper bound for the dispersion entropy of the possible metastable state in the exact SIS process. As a result, we find sufficient conditions such that any initial infection over the network either dies out or reaches a localized metastable state. Unlike existing studies relying on the solution of mean--field approximate models, our investigation of epidemic localization is based on characteristics of exact SIS equations. Our proposed method offers a new paradigm in studying spreading processes over complex networks. 
\end{abstract}

\section{Introduction}
Epidemic models aspire to describe the spread of viruses/worms/ideas in biological/computer/social networks\cite{wang2003epidemic,daley2001epidemic,kephart1991directed}.
One of the simplest epidemic models over a complex network is the susceptible--infected--susceptible (SIS) model. In the SIS epidemic model, each individual is either infected or susceptible. An infected individual spreads the virus to her susceptible neighbors with an \textit{infection rate} $\beta>0$ and itself get cured with the \textit{curing rate} $\delta>0$ and becomes susceptible again. SIS model exemplifies a networked dynamical system where the interaction between simple node-level dynamics and network topology leads to nontrivial emergent behaviors.

Despite the simple description of the SIS process, only a few \emph{exact} results about the SIS process on a generic graph $G$ have been proposed. In the SIS process, the disease--free state is an absorbing state, i.e., any initial infection will ultimately die out regardless of the infection rate\cite{ganesh2005effect,van2009virus}. The extinction time depends on the structure of the network, the infection and curing rates, and the initial infection. For this model, Ganesh \textit{et al.} \cite{ganesh2005effect} rigorously proved that any initial infections die out exponentially in time if the infection strength, $\tau\triangleq\beta/\delta$, is smaller than the inverse of the spectral radius of the graph $\rho(G)$.  However, for the values of $\tau$ larger than $1/\rho(G)$, the process may reach \emph{metastability} where the extinction time is exponentially long with respect to the population size and the process stays in a state that resembles equilibrium \cite{mountford2013metastable}.

A true epidemic outbreak concerns network--wide invasion of the contact graph rather than localized infection of certain sites within the contact network. This argument leads us to the concept of infection \textit{localization} in the SIS model over a generic graph. To illustrate this phenomenon, consider the Line--Clique graph in Fig.(\ref{motivatingexampleth.eps}) consisting of two subgraphs, the clique part of size $m$ and the line part with size $N>>m$. The spectral radii of the clique part and the line part separately are $m-1$ and $\sim2$, respectively. However, the spectral radius of the Line--Clique graph is close to that of the clique subgraph. For such a graph, any infection dies out exponentially in time as long as the infection strength $\tau$ is smaller than $1/(m-1)$. However, what happens for $\tau>1/(m-1)$? We know for $\tau\leq1/2$, the line subgraph, considered separately, cannot sustain infections for a long time. The argument above leads to the speculation that for  $1/(m-1)\leq\tau\leq1/2$, if the Line--Clique network with $N>>m$ reaches a metastable state, the infection should be mostly localized on the clique part of the network. Such localized invasions are not interesting because they concern tiny portions of the contact network.

\begin{figure}[ptb]%
\centering
\includegraphics[width=3.4in]{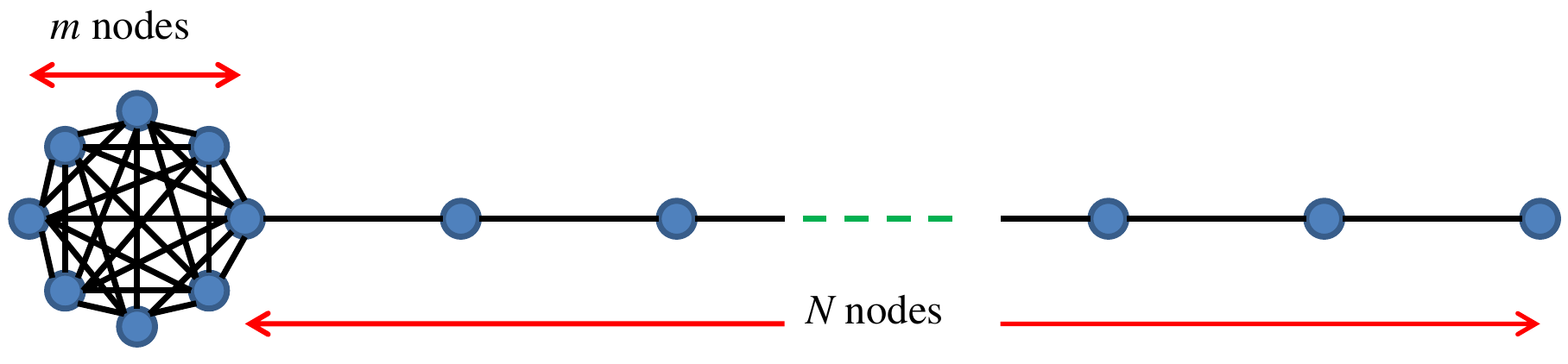}
\caption{The Line-Clique graph consisting of a complete graph of size $m$ and a line graph of size $N>>m$. It is possible to observe a metastable state where infections mostly localize on the clique part --- a tiny portion of the network.}%
\label{motivatingexampleth.eps}%
\end{figure}

Localization of SIS process has recently been reported in the literature. Goltsev \textit{et al.} \cite{goltsev2012localization} studied the steady--state solution of the mean--field approximated SIS model for $\tau$ close to $1/\rho(G)$, where the equilibrium solution is proportional to the dominant eigenvector of the contact network adjacency matrix. The major drawback of such approaches \cite{goltsev2012localization,boguna2013nature,odor2013spectral} is that they fully rely on approximate models in a region where they are least accurate. Mean--field models perform more accurately at early times and for large values of $\tau$, while they can perform very poorly at steady--state and for $\tau$ close to $1/\rho(G)$.

In this paper, we propose a dispersion measure based on Kullback–Leibler divergence \cite{kullback1951information} that quantifies how the marginal probability of infection is far from a homogeneous spread over the nodes of the network. We show that formulating a maximum entropy problem, we can find an upper bound for the dispersion entropy of the possible metastable state. As a result, any initial infection over the network either dies out or reaches a metastable state that has lower entropy than the upper bound. Unlike existing studies, our investigation of epidemic localization does not use mean-field approximation of the SIS process and is based on exact equations arguments. Convex optimization techniques \cite{boyd2004convex} allows for efficient solution of the maximum entropy problem even for large networks. Numerous Monte Carlo simulation of the SIS model support our results.

The rest of the paper is organized as follows. In Section II, we briefly explain graph theory tools used for modeling the SIS process as well as the equations that govern the SIS model over a complex network. Section III discusses our approach to epidemic localization. Supporting numerical experiments are presented in Section IV. The paper concludes in Section V.

\section{SIS Epidemic Spreading Model on a Graph}

Graph theory is
widely used for representing the contact topology in an epidemic network\cite{newman2010networks, van2010graph}. In
the SIS model, the network of $N$ agents is represented by a graph
$G=\left\{  \mathcal{V},\mathcal{E}\right\}  $, where $\mathcal{V}$
is the set of agents and $\mathcal{E\subseteq V\times V}$ denotes the set of
edges between agents. An edge $(i,j)\in\mathcal{E}$ exists if agent $j$ can
directly infect agent $j$. In this paper, we assume the
contact graph is undirected, i.e., for any $(i,j)\in\mathcal{E}$, we have
$(j,i)\in\mathcal{E}$. For the contact graph $G$, the adjacency matrix
$A=\left[  a_{ij}\right]  \in \mathbb{R}^{N\times N}$ consists of the elements $a_{ij}=1$ if and only if
$(i,j)\in\mathcal{E}$ else $a_{ij}=0$. Spectral radius of a graph $G$, denoted by $\rho(G)$, is defined as the spectral radius of its adjacency matrix --- the largest magnititude of the adjacency matrix $A$ eigenvalues. Moreover, a path $\mathcal{P}$ of length $k$ between vertices $v_{0}$ and $v_{k}$ is an ordered sequence $(v_{0},...,v_{k})$
where $(v_{i-1},v_{i})\in\mathcal{E}$ for $i=1,...,k$.  Graph $G%
$\ is (strongly) connected if any two vertices are connected with a path. If the contact graph is connected, the adjacency matrix $A$ is irreducible and according to the Perron--Frobenius theorem \cite{van2010graph}, the largest eigenvalue of $A$, denoted by $\lambda_{1}(A)$, is a positive real number and the corresponding eigenvector $\boldsymbol{x}_1(A)$ is the only eigenvector of the adjacency matrix with all positive elements.

In the SIS\ model, the state $X_{i}(t)$ of an agent $i$ at time $t$ is a
Bernoulli random variable, where $X_{i}(t)=0$ if agent $i$ is susceptible and
$X_{i}(t)=1$ if it is infected. The \emph{curing process} for the infected agent
$i$ has an exponentially distributed time duration described by the curing rate $\delta\in\mathbb{R}^{+}$. 
The \emph{infection} process for the susceptible agent $i$ in contact with only one infected neighbor has an exponentially distributed time period characterized by the infection rate $\beta\in\mathbb{R}^{+}$. An agent in contact with more than one infected neighbors occurs at rate $\beta Y_{i}(t)$,
where $Y_{i}(t)\triangleq\sum_{j=1}^{N}a_{ij}X_{j}(t)$ is the number of
infected neighbors of agent $i$ at the time $t$. Fig. \ref{sis.eps} shows a schematic of the
SIS epidemic spreading model over a generic graph.

\begin{figure}[ptb]%
\centering
\includegraphics[width=3.3in]{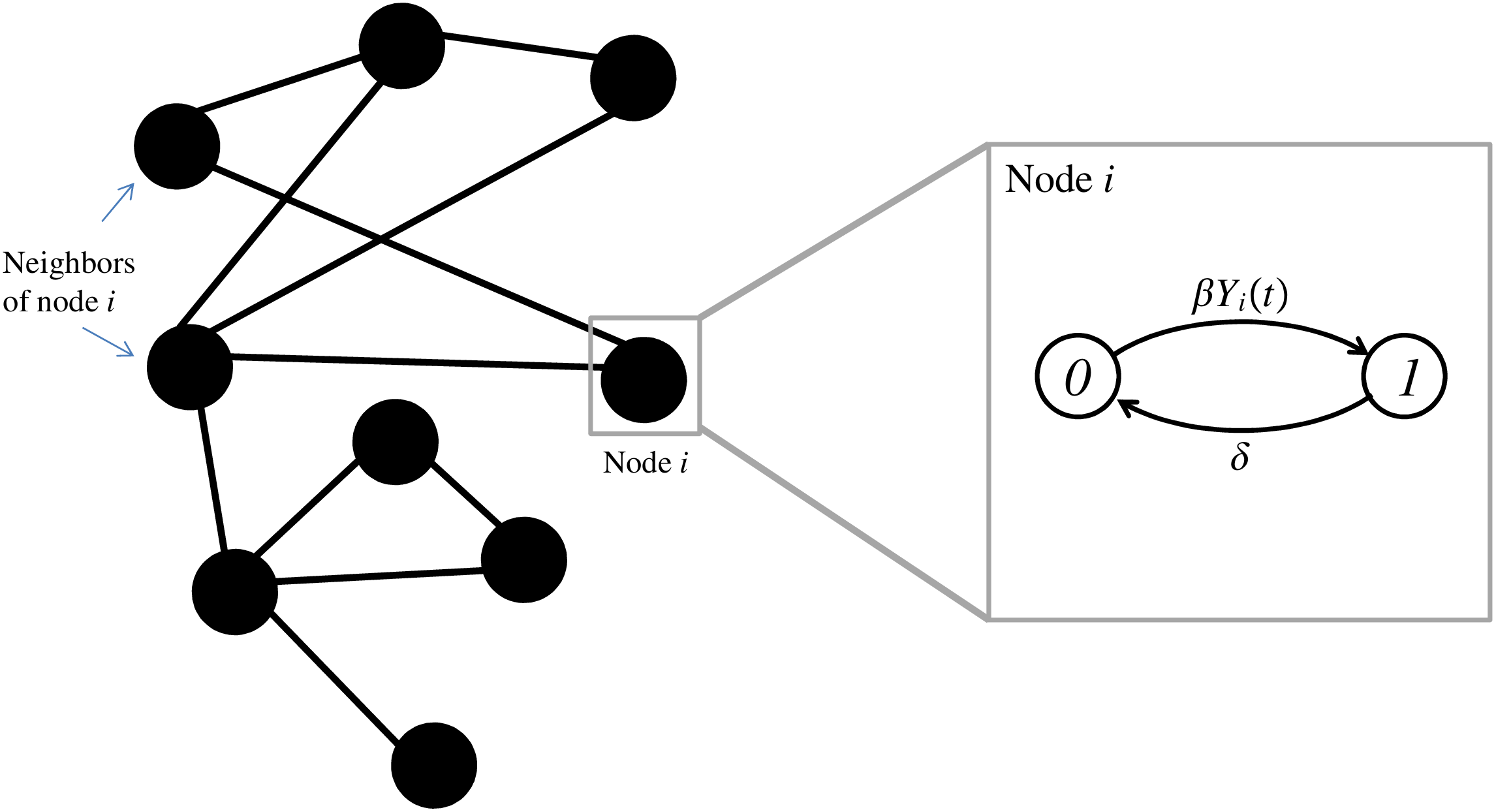}
\caption{Schematics of a contact network along with the agent-level stochastic transition diagram for agent $i$ according to the SIS epidemic spreading model. The parameters $\beta$ and $\delta$ denote the infection rate and curing rate, respectively. $Y_{i}(t)$ is the number of infected neighbors of agent $i$ at time $t$.}%
\label{sis.eps}%
\end{figure}

The networked SIS process, described above, is indeed a Markov process on the network state $\mathbf{X}\triangleq[X_1,...,X_N]^T$ with the state--space of $2^N$, i.e., next event on the network state only depends on the current network state. The node--level description of the SIS process is
\begin{align*}
&Pr(X_{i}(t+\Delta t)=1|X_{i}(t)=0,\mathbf{X}(t))=\beta Y_{i}(t)\Delta
t+o(\Delta t) \\
&Pr(X_{i}(t+\Delta t)=0|X_{i}(t)=1,\mathbf{X}(t))=\delta\Delta t+o(\Delta t),
\end{align*}
where $o(\Delta t)$ denotes terms such that $lim_{\Delta t\rightarrow 0}\frac{o(\Delta t)}{\Delta t}=0$. From the node--level description of the SIS process and following the procedure of \cite{sahneh2013generalized}, the differential equations governing the expected value $E[X_{i}]$ can be written as
\begin{align}
\frac{d}{dt}E[X_{i}] &  =\beta\sum a_{ij}E[(1-X_{i})X_{j}]-\delta E[X_{i}]\label{exact}\\
\nonumber
&  =\beta\sum a_{ij}E[X_{j}]-\beta\sum a_{ij}E[X_{i}X_{j}]-\delta E[X_{i}]\label{exact}
\end{align}
for $i \in \{1,...,N\}$ \cite{cator2012second}. Eq. (\ref{exact}) is not a closed system as the evolution of $E[X_i]$ depends on the joint probabilities of the pairs $X_{i}X_{j}$. Furthermore, if we proceed to derive the time derivative of $E[X_iX_j]$, it turns out dependent on higher order terms  $E[X_iX_jX_k]$ which are expected values of triplets. The procedure goes on until we reach a closed system of $2^N-1$ equations involving $E[X_i...X_N]$. Such exponentially enormous state space of the exact model challenges feasibility of analytical investigation of exact SIS process.

Only a few \emph{exact} results currently exist for the SIS process on a generic graph $G$. In the SIS process, the disease--free state is an absorbing state, i.e., any initial infection will ultimately die out regardless of the infection rate\cite{ganesh2005effect,van2009virus}. The extinction time, in general, depends on the snetwork tructure network, infection and curing rates, and the initial infection. Ganesh et al.\cite{ganesh2005effect} rigorously proved that  for $\tau<1/\rho(G)$, any initial infections die out exponentially. In particular:

\begin{theorem}[Ganesh et al. \cite{ganesh2005effect}] \label{Th: Ganesh}
For $\tau<1/\rho(G)$, the probability that initial infections $X(0)$ have not completely died out by time $t$ is upper-bounded by an exponentially decaying function as
\[
\Pr(\mathbf{X}(t)\neq0)\leq\sqrt{N \textstyle\sum_{i=1}^{N}X_{i}(0)}\ e^{(\tau\rho(G)-1)t}.
\]
\end{theorem}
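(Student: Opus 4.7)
The plan is to combine a linear upper bound on the marginal infection probabilities with a Markov/union-bound argument. Let $p_i(t) \triangleq E[X_i(t)] = \Pr(X_i(t)=1)$ and write $\mathbf{p}(t) = (p_1, \ldots, p_N)^T$. Since $X_i X_j \geq 0$, the exact equation (\ref{exact}) immediately yields the componentwise differential inequality
\[
\frac{d}{dt}\mathbf{p}(t) \leq (\beta A - \delta I)\,\mathbf{p}(t),
\]
because the discarded term $-\beta \sum_j a_{ij} E[X_i X_j]$ is nonpositive.

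Next, I would exploit the fact that $\beta A - \delta I$ is a Metzler matrix (nonnegative off-diagonal entries) so that the linear flow $e^{(\beta A - \delta I)t}$ preserves the nonnegative orthant and, by a standard comparison argument for cooperative linear systems (Kamke/Müller type), the solution of the differential inequality is dominated componentwise by the solution of the corresponding linear ODE with the same initial data. Therefore
\[
\mathbf{p}(t) \leq e^{(\beta A - \delta I)t}\,\mathbf{X}(0),
\]
where $\mathbf{X}(0)\in\{0,1\}^N$ is the (deterministic) initial condition.

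Then I would convert the bound on $\mathbf{p}(t)$ into a bound on $\Pr(\mathbf{X}(t)\neq 0)$ in three short steps. First, by a union bound,
\[
\Pr(\mathbf{X}(t)\neq 0) = \Pr\!\left(\textstyle\sum_i X_i(t) \geq 1\right) \leq \sum_{i=1}^N p_i(t) = \mathbf{1}^T \mathbf{p}(t).
\]
Second, by Cauchy--Schwarz, $\mathbf{1}^T \mathbf{p}(t) \leq \sqrt{N}\,\|\mathbf{p}(t)\|_2$. Third, since $G$ is undirected, $A$ is symmetric and so is $\beta A - \delta I$; hence
\[
\|\mathbf{p}(t)\|_2 \leq \bigl\|e^{(\beta A - \delta I)t}\bigr\|_2 \,\|\mathbf{X}(0)\|_2 = e^{(\beta \lambda_1(A)-\delta)t}\sqrt{\textstyle\sum_i X_i(0)},
\]
using $X_i(0)^2 = X_i(0)$ and that the spectral norm of a symmetric matrix exponential is determined by its largest eigenvalue. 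Recognizing $\beta\lambda_1(A)-\delta = \delta(\tau\rho(G)-1)$ (which collapses to $\tau\rho(G)-1$ in the standard $\delta=1$ time scaling used by Ganesh et al.) gives exactly the claimed exponential bound.

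The only real subtlety is step two: the rigorous passage from the scalar componentwise differential inequality to the componentwise bound $\mathbf{p}(t)\leq e^{(\beta A - \delta I)t}\mathbf{X}(0)$. The naive Grönwall argument does not apply directly because the inequality is vector-valued; the correct justification uses the Metzler/cooperative structure of $\beta A-\delta I$ so that $e^{(\beta A-\delta I)(t-s)}$ has nonnegative entries and preserves inequalities. Everything else (the union bound, Cauchy--Schwarz, and the spectral-norm evaluation via symmetry of $A$) is routine.
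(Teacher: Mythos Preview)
The paper does not actually prove Theorem~\ref{Th: Ganesh}; it is quoted verbatim from Ganesh \textit{et al.}\ \cite{ganesh2005effect} and used as background. So there is no in-paper proof to compare against, and your argument should be judged on its own and against the original source.

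Your proof is correct. The only point that deserves care is exactly the one you flag: passing from the componentwise differential inequality $\dot{\mathbf p}\le(\beta A-\delta I)\mathbf p$ to $\mathbf p(t)\le e^{(\beta A-\delta I)t}\mathbf X(0)$. Your invocation of the Kamke--M\"uller comparison principle for cooperative (Metzler) linear systems is the right tool, since $\beta A-\delta I$ has nonnegative off-diagonal entries; the remaining steps (Markov/union bound, Cauchy--Schwarz, and the spectral-norm identity $\|e^{Mt}\|_2=e^{\lambda_{\max}(M)t}$ for symmetric $M$) are routine and assembled correctly. The identification $\beta\rho(G)-\delta=\delta(\tau\rho(G)-1)$, reducing to the stated exponent under the $\delta=1$ time scaling, is also handled properly.

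It is worth noting that your route differs slightly from the original Ganesh \textit{et al.}\ argument. They work at the level of the stochastic process: the SIS chain is stochastically dominated by a linear ``multi-infection'' process whose mean vector satisfies the exact linear ODE $\dot{\mathbf m}=(\beta A-\delta I)\mathbf m$, and then Markov's inequality plus Cauchy--Schwarz finishes. You instead work at the level of first moments, using the exact equation~(\ref{exact}) from this paper to obtain a differential inequality and then a deterministic comparison principle. Both approaches yield the identical bound; yours is arguably more in the spirit of the present paper, which explicitly exploits the same inequality $dp_i/dt\le\beta\sum_j a_{ij}p_j-\delta p_i$ to derive condition~(\ref{a2}).
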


Theorem \ref{Th: Ganesh} suggests a sufficient condition for the complete extinction of infection in the graph. However, considering the example discussed in the introduction section, for $\tau>1/\rho(G)$, initial infection may reach a metastable state localized in a small number of nodes. For such cases achieving a metastable state does not indicate an epidemic outbreak. In this paper, we go beyond the result of  Ganesh et al. \cite{ganesh2005effect} and seek conditions that lead to either complete extinction of infections or their localized persistence.

\section{Existing Approaches to Localization}

A commonly adopted technique to overcome the complexity of exact SIS equations (\ref{exact}) is moment closure, where high order expectations are \emph{approximated} by lower order terms. In particular, a first-order closure approximation, also referred to as mean-field type approximation, assumes that $X_{i}$ and $X_{j}$ are uncorrelated, i.e., $E[X_{i}X_{j}]\approx E[X_{i}]E[X_{j}]$. The mean-field approximation leads to a system of nonlinear differential equations as \cite{van2009virus}
\begin{equation}
\dot{v}_{i}=\beta(1-v_{i})\sum a_{ij}v_{j}-\delta v_{i},\label{NIMFA}%
\end{equation}
known as N-intertwined mean-field approximation (NIMFA) model; extensively studied in the literature \cite{van2011n,van2012n,van2012epidemic,li2012susceptible,khanafer2014stability}. Van Mieghem and Van de Bovenkamp rigorously proved that the solution of NIMFA model (\ref{NIMFA}) upper-bounds the marginal infection probability $p_i\triangleq E[X_i]$ from exact equations (\ref{exact}), i.e., $\forall i,\  p_i(t)\leq v_i(t)$ if $p_i(0)=v_i(0)$ \cite{van2013non}. Furthermore, the equilibrium points of NIMFA model (\ref{NIMFA}) satisfying \cite{van2009virus}
\begin{equation}
\frac{v_{i}^{\ast}}{1-v_{i}^{\ast}}=\tau\sum a_{ij}v_{j}^{\ast},
\label{NIMFA_eq}
\end{equation}
shows a bifurcation behavior at $\tau_c=1/\rho(G)$. Specifically, for $\tau<1/\rho(G)$, the disease-free state $v^*_i=0$ is the only equilibrium point while for $\tau>1/\rho(G)$ there exists a second equilibrium point with $v^*_i>0, \forall i \in \{1,...,N\}$ \cite{van2009virus}. Furthermore, for $\tau$ very close to $1/\rho(G)$, the positive equilibrium point $\boldsymbol{v}^*=[v^*_1,...,v^*_N]^T$ is parallel to the dominant eigenvector of the adjacency matrix $A$, i.e., $\boldsymbol{v}^*=c\tilde{\tau}\boldsymbol{x}_1(A)+o(\tilde{\tau})$ for $\tilde{\tau}\triangleq \tau-1/\rho(G)$ and some constant $c$ \cite{van2009virus}.

Considering the relation between the steady state solution of NIMFA model close to the critical point $\tau_c=1/\rho(G)$ and dominant eigenvector of the adjacency matrix, Goltsev \textit{et al.} \cite{goltsev2012localization} studied the homogeneity of the dominant eigenvector in order to address the localization of infection. Specifically, they used the inverse participation ratio (IPR) defined as
\begin{equation}
IPR\triangleq \frac{1}{N\sum \boldsymbol{x}^4_{1,i}},
\label{IRP}
\end{equation}
as a measure of localization; claiming that $IRP>0$ indicates delocalized epidemics while $IRP\rightarrow 0$ pin-points a localized one.   However, as we argued earlier, such approaches---as adopted in \cite{goltsev2012localization,boguna2013nature,odor2013spectral}---has the unjustified reliance on mean-field approximate model, at the steady state, and close to the bifurcation transition point $\tau_c=1/\rho(G)$. Statistical mechanics tells us mean-field models perform worst in the neigborhood of bifurcation transition points. Indeed, their key equation $\boldsymbol{v}^*=c\tilde{\tau}\boldsymbol{x}_1(A)+o(\tilde{\tau})$ can be totally irrelevant to the actual meta-stable state distribution.

In this paper, instead of \textit{assuming} a (possibly highly inaccurate) approximate \textit{solution} to the meta-stable state, we focus on a simple \textit{property} of the possible meta-stable state, directly drived from \textit{exact} equations.

\section{Main Results}
We are particularly interested in study of infection delocolization in the metastable state. Metastable state resembles an equilibrium where the infection probability of each node stays (almost) constant, i.e., $d\boldsymbol{p}(t)/dt\rightarrow0$, where $\boldsymbol{p}=[p_1,...,p_N]^T$. To begin with, we use a simple observation from the exact SIS equations (\ref{exact}) that $dp_i/dt\leq \beta\sum a_{ij}p_j-\delta p_i$, due to the fact that $X_iX_j$, and as a consequence, $E[X_iX_j]$ is nonnegative. Therefore, when metastability is achieved, we must have
\begin{equation}
(\beta A-\delta I)\boldsymbol{p}_\infty
\geq 0,\label{a2}
\end{equation}
where, by slight abuse of notation, $\boldsymbol{p}_\infty$ denotes marginal infection probability vector in the metastable state.

We do not intend to compute $\boldsymbol{p}_\infty$. Instead, we look at the set of all plausible positive vectors that satisfy (\ref{a2}). The key idea is that if all the vectors in this set are localized then necessarily the actual metastable state is also localized. For this, we need to define a measure of delocalization.

We propose to use utilize the notion of distance between probability distributions to develop a delocalization measure. First, note that summation of marginal infection probabilities $\sum_{i}^{N}p_{i}$ provides a descriptor for the expected size of the epidemic, however, it does not provide any information on how the infection is distributed among the agents. Therefore, in order to study the dispersion of infection regardless its size, first we can normalize infection probabilities $p_i$ by $\sum_{i}^{N}p_{i}$, and work with $\overline{p}_{i}\triangleq p_{i}/\sum_{i}^{N}p_{i}$. Since $\sum_{i}^{N}\overline{p}_{i}=1$, we treat $\overline{\boldsymbol{p}}=[\overline{p}_{1},...,\overline{p}_{N}]^T$ as a probability distribution and then utilize concepts of distance between distributions to quantify the distance between $\overline{\boldsymbol{p}}$ and uniform distribution $\boldsymbol{u}=[\frac{1}{N},...,\frac{1}{N}]^T$. In particular, we use Kullback-Leibler divergence \cite{kullback1951information} which measures
\begin{equation}
D_{KL}(\overline{\boldsymbol{p}}||\boldsymbol{u})=-\sum_{i}^{N}\overline{p}_{i}\ln{\overline{p}_{i}}-\ln(N).
\end{equation}
Therefore, in order to study the level of delocalization, we can use entropy as a measure. For any probability distribution of infection $p_{i}$, we can calculate a \textit{dispersion entropy} as
\begin{equation}
S(\boldsymbol{p})=-\sum_{i}^{N}\overline{p}_{i}\ln{\overline{p}_{i}},\label{a4}
\end{equation}
where $\overline{p}_{i}=p_{i}/\sum_{i}^{N}p_{i}$ and $N$ is the number of nodes in the network. The defined entropy reaches its maximum, $\ln(N)$, when $D_{KL}(\overline{\boldsymbol{p}}||\boldsymbol{u})=0$, i.e., all the nodes have same none-zero probability of infection.

\begin{theorem} \label{Th: MEP}
Assuming a contact graph $G$, infection strength $\tau$ and initial infection probability $\boldsymbol{p}(0)$, if a metastable state is achieved, the dispersion entropy of the metastable state is upper-bounded by $S^*$ which is the solution  of the following maximum entropy problem:
\[
maximize: \  \ \ \ \ S=-\textstyle\sum_{i}^{N}p_{i}\ln{p_{i}} ,
\]
\[
subject\ to: \ \ (\tau A-I)\boldsymbol{p}\geq 0,\\
\]
\[
\ \ \ \ \ \ \ \ \ \ \ \ \textstyle \sum_{i}^{N}p_{i}=1, \label{MEP}\\
\]
\[
\ \ \ \ \ \ \boldsymbol{p}>0.
\]

\end{theorem}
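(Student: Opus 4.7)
The strategy is to show that the normalized metastable distribution $\overline{\boldsymbol{p}}_\infty \triangleq \boldsymbol{p}_\infty/\sum_i p_{\infty,i}$ is a feasible point of the stated maximum entropy problem (MEP), so that its dispersion entropy is automatically bounded above by the optimum $S^*$.

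The first step is to notice that inequality (\ref{a2}) is already essentially the first constraint of the MEP: since $\delta>0$, the relation $(\beta A-\delta I)\boldsymbol{p}_\infty \geq 0$ is equivalent to $(\tau A-I)\boldsymbol{p}_\infty \geq 0$. Crucially, this comes directly from the exact equations (\ref{exact}) via $E[X_iX_j] \geq 0$, with no appeal to any moment-closure approximation, so it is valid for any genuine metastable state of the true SIS process.

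The second step exploits the fact that both $(\tau A-I)\boldsymbol{p}\geq 0$ and $\boldsymbol{p}>0$ are invariant under multiplication by a positive scalar; hence $\overline{\boldsymbol{p}}_\infty$ inherits both inequalities, and by construction $\sum_i \overline{p}_{\infty,i}=1$. Thus $\overline{\boldsymbol{p}}_\infty$ is feasible for the MEP. Because the dispersion entropy defined in (\ref{a4}) is precisely the MEP objective evaluated at $\overline{\boldsymbol{p}}_\infty$, namely $-\sum_i \overline{p}_{\infty,i}\ln\overline{p}_{\infty,i}$, we conclude $S(\boldsymbol{p}_\infty)\leq S^*$.

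The only point that I expect to need a little care is the strict positivity requirement $\boldsymbol{p}>0$. For a connected contact graph, irreducibility of $A$ together with $\tau A\boldsymbol{p}_\infty \geq \boldsymbol{p}_\infty$ and $\boldsymbol{p}_\infty\neq 0$ should exclude zero coordinates via a Perron--Frobenius-type argument. If instead one wishes to allow $p_i=0$ with the convention $0\ln 0=0$, then $\overline{\boldsymbol{p}}_\infty$ lies in the closure of the feasible region and continuity of the entropy on the closed simplex still delivers the same bound. Either way, the substance of the proof reduces to the one-line feasibility observation above; no optimization machinery (duality, KKT) is needed for the statement itself, and convex analysis enters only later when one actually computes $S^*$ numerically.
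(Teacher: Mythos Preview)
Your argument is correct and follows the same route as the paper: show that the normalized metastable vector is feasible for the MEP (since the linear constraint $(\tau A-I)\boldsymbol{p}\ge 0$ is scale-invariant and the normalization is imposed as a separate constraint), whence its dispersion entropy is at most $S^*$. In fact you are more careful than the paper, which dispatches the proof in one sentence and does not discuss the strict positivity issue at all; your Perron--Frobenius/closure remark on $\boldsymbol{p}>0$ is a genuine addition.
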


\begin{proof}
The solution to the above optimization problem maximizes the entropy defined in Eq. (\ref{a4}) because, instead of normalization of the probabilities, $\sum_{i}^{N}p_{i}=1$ has been added to the constraints set and inequality (\ref{a2}) is linear which is not altered by scaling $\boldsymbol{p}$.
\end{proof}

The maximum entropy problem can be solved efficiently for large network sizes using convex optimization techniques \cite{boyd2004convex}.

\begin{lemma}\label{lmfe}
If $\tau<1 /\rho(G)$, there does not exist any $\boldsymbol{p}$ that satisfies condition (\ref{a2}). Furthermore, for $\tau\geqslant 1 /\rho(G)$, the constraint of Theorem 2 has a non-empty feasible set.
\end{lemma}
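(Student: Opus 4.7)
The plan is to leverage the Perron--Frobenius theorem applied to the irreducible nonnegative adjacency matrix $A$, whose dominant eigenvalue equals $\rho(G)$ and whose associated eigenvector $\boldsymbol{x}_1(A)$ is strictly positive. Both halves of the lemma follow from pairing candidate vectors against this eigenvector.

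For the first claim, I would argue by contradiction: suppose $\boldsymbol{p}>0$ satisfies $(\tau A-I)\boldsymbol{p}\geq 0$. Taking the inner product of both sides with $\boldsymbol{x}_1(A)$ and using $\boldsymbol{x}_1(A)^T A = \rho(G)\,\boldsymbol{x}_1(A)^T$ (by symmetry of $A$, or the left Perron eigenvector in the general case) yields $\tau\rho(G)\,\boldsymbol{x}_1(A)^T\boldsymbol{p}\geq \boldsymbol{x}_1(A)^T\boldsymbol{p}$. Since $\boldsymbol{x}_1(A)>0$ and $\boldsymbol{p}>0$, the scalar $\boldsymbol{x}_1(A)^T\boldsymbol{p}$ is strictly positive, so it can be cancelled to give $\tau\rho(G)\geq 1$, contradicting $\tau<1/\rho(G)$. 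This argument actually works even under the weaker assumption $\boldsymbol{p}\geq 0$, $\boldsymbol{p}\neq 0$, which is worth noting.

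For the second claim, I would simply exhibit an explicit feasible point when $\tau\geq 1/\rho(G)$. The natural candidate is the normalized Perron eigenvector
\[
\boldsymbol{p} \;=\; \frac{\boldsymbol{x}_1(A)}{\mathbf{1}^T\boldsymbol{x}_1(A)},
\]
which automatically satisfies $\boldsymbol{p}>0$ and $\sum_i p_i = 1$. Because $A\boldsymbol{p}=\rho(G)\boldsymbol{p}$, we have $(\tau A - I)\boldsymbol{p} = (\tau\rho(G)-1)\boldsymbol{p}\geq 0$ exactly when $\tau\geq 1/\rho(G)$, so the feasible set of the maximum entropy problem is nonempty.

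There is no real obstacle here: the proof amounts to two standard Perron--Frobenius one-liners. The only point requiring care is being explicit about which version of the positivity constraint on $\boldsymbol{p}$ is in force (strict positivity in the statement of Theorem~\ref{Th: MEP} versus the nonnegative, nonzero case that is natural for the metastable marginal), and checking that the cancellation of $\boldsymbol{x}_1(A)^T\boldsymbol{p}$ is justified in whichever variant is adopted.
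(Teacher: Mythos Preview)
Your argument is correct. The second half of your proof --- exhibiting the normalized Perron eigenvector as a feasible point when $\tau\geq 1/\rho(G)$ --- is exactly what the paper does.

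For the first half, however, you take a different route. The paper argues via the quadratic form: any $\boldsymbol{p}>0$ satisfying $(\tau A - I)\boldsymbol{p}\geq 0$ must also satisfy $\boldsymbol{p}^{T}(\tau A - I)\boldsymbol{p}\geq 0$, but for $\tau<1/\rho(G)$ the symmetric matrix $\tau A - I$ has all eigenvalues strictly negative and is therefore negative definite, a contradiction. Your argument instead pairs $(\tau A - I)\boldsymbol{p}\geq 0$ with the Perron eigenvector $\boldsymbol{x}_1(A)$ and cancels the strictly positive scalar $\boldsymbol{x}_1(A)^T\boldsymbol{p}$. Both are one-line spectral arguments, but they emphasize different structure: the paper exploits symmetry of $A$ and full negative definiteness of $\tau A - I$, while your approach uses only the single Perron eigenpair and would extend verbatim to irreducible nonnegative (possibly non-symmetric) $A$ via the left Perron eigenvector. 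Your observation that the contradiction persists under the weaker hypothesis $\boldsymbol{p}\geq 0$, $\boldsymbol{p}\neq 0$ is also valid for the paper's argument (negative definiteness forces $\boldsymbol{p}^T(\tau A - I)\boldsymbol{p}<0$ for any nonzero $\boldsymbol{p}$), though the paper does not state it.
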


\begin{proof}
Any feasible probability distribution $\boldsymbol{p}>0$ that satisfies condition (\ref{a2}) satisfies 
\begin{equation}
\boldsymbol{p}^{T}(\tau A-I)\boldsymbol{p}\geq 0.\label{a3}
\end{equation}
However, if $\tau<1 /\rho(G)$, matrix $(\tau A-I)$ is a negative definite matrix which cannot allow (\ref{a3}). Therefore, if $\tau<1 /\rho(G)$, there does not exist any $\boldsymbol{p}$ that satisfies condition (\ref{a2}). On the other hand, for $\tau\geqslant 1 /\rho(G)$, the dominant eigenvector of $A$, i.e., $\boldsymbol{p}=\frac{1}{||\boldsymbol{x}_1(A)||_1}\boldsymbol{x}_1(A)$, is always feasible.
\end{proof}

We would like to remark the existence of a distribution with a high value of dispersion entropy that satisfies condition (\ref{a2}) does not indicate the existence of a metastable state. However, \emph{if there exist a metastable state}, our analysis assigns an upper bound to its dispersion entropy. Hence, if the optimization problem yields a small value for entropy, the infection does not invade a large number of nodes in the metastable state, hence providing a \emph{sufficient condition for either complete extinction of infections or their localized persistence}.  
  
Moreover, if $\tau\downarrow 1/\rho(G)$, Lemma \ref{lmfe} indicates the feasible space of optimization problem is a small neighborhood including the dominant eigenvector of the adjacency matrix $\boldsymbol{x}_1(A)$, which makes $S^*\simeq S(\boldsymbol{x}_1(A))$. In this case, the results of our analysis are compatible with those of \cite{goltsev2012localization}, except we use a different measure for localization. However for higher values of $\tau$, our analysis can still provide an upper bound for the delocalization of SIS process; while an analysis based on the mean-field approximation does not necessarily characterize the infection delocalization in exact SIS process.
 
\begin{figure}[ptb]
\begin{subfigure}{3.4in}
\includegraphics[scale=0.39]{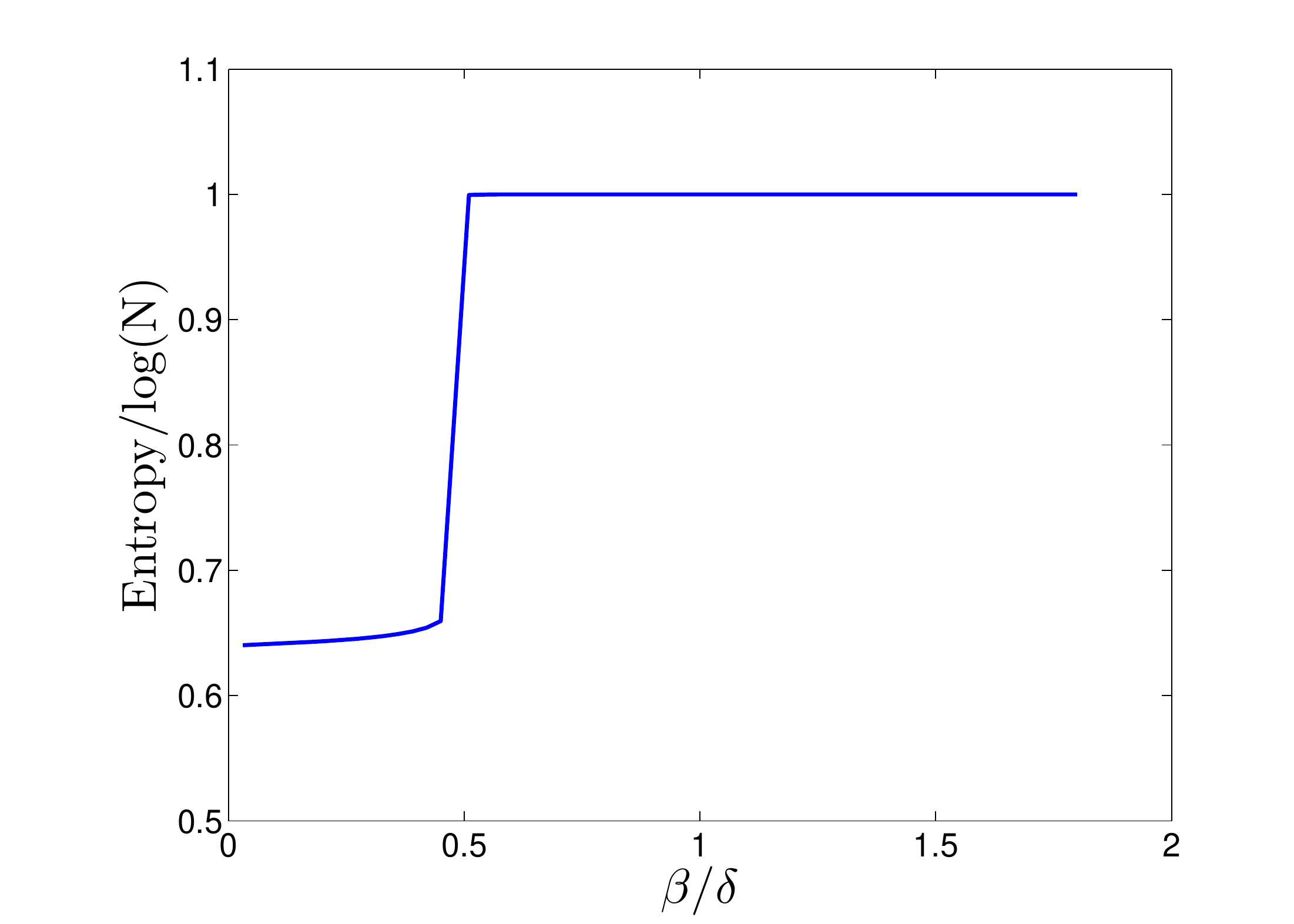}
\caption{}
\label{lcop}
\end{subfigure} 
\begin{subfigure}{3.4in}
\includegraphics[scale=0.51]{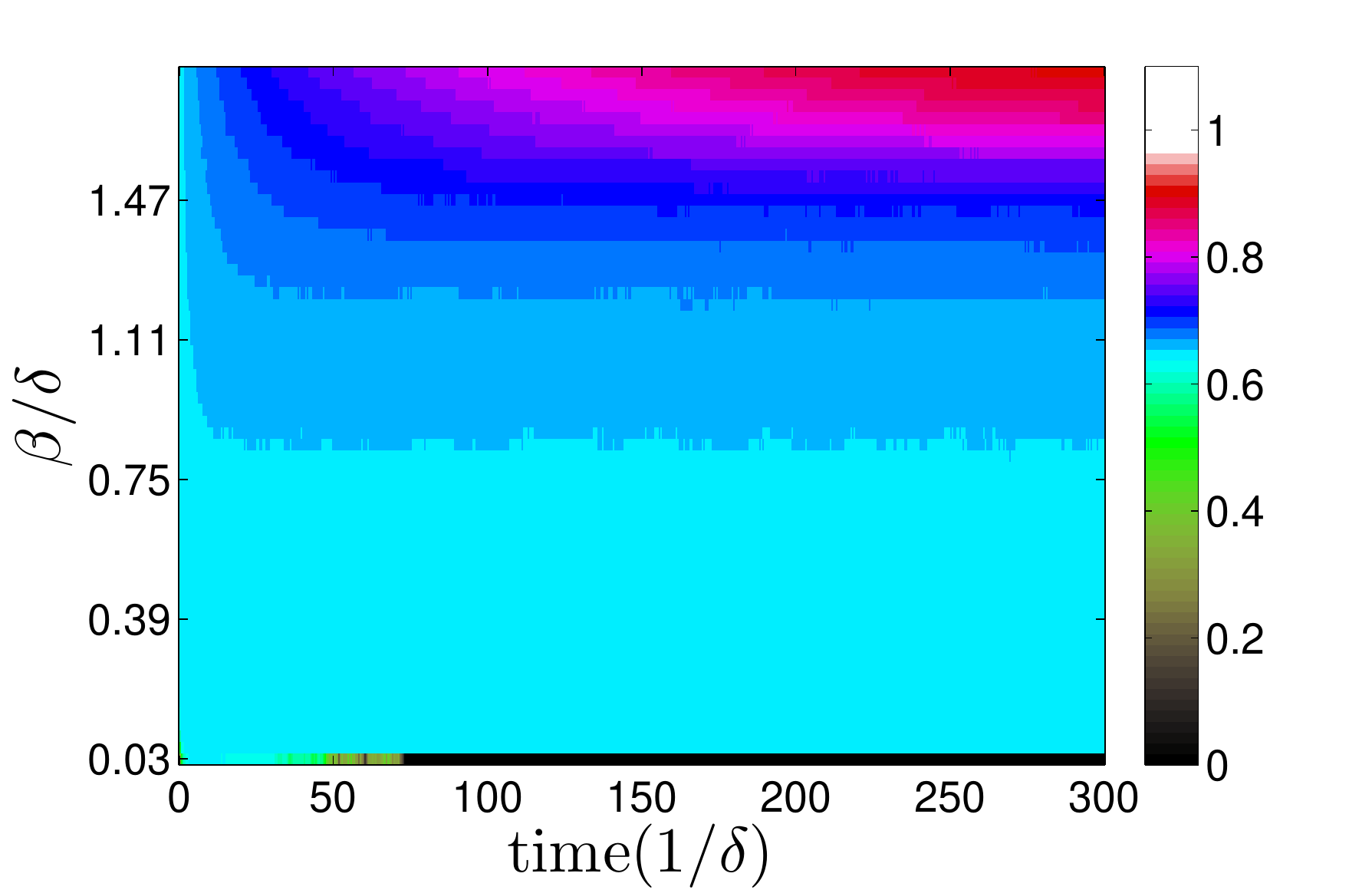}
\caption{}
\label{lcsim}
\end{subfigure}
\caption{(a) The entropy of the optimized distribution for the Line-Clique graph in Fig. \ref{motivatingexampleth.eps}. As can be seen, there is a sudden jump at $\tau=\frac{1}{2}$. (b) Monte Carlo simulation of the SIS model over the Line-Clique graph. Color represents dispersion entropy of infection probability distribution divided by $\ln(N)$.}
\label{lc}
\end{figure}

\begin{figure}[ptb]
\begin{subfigure}{3.40in}
\includegraphics[scale=.43]{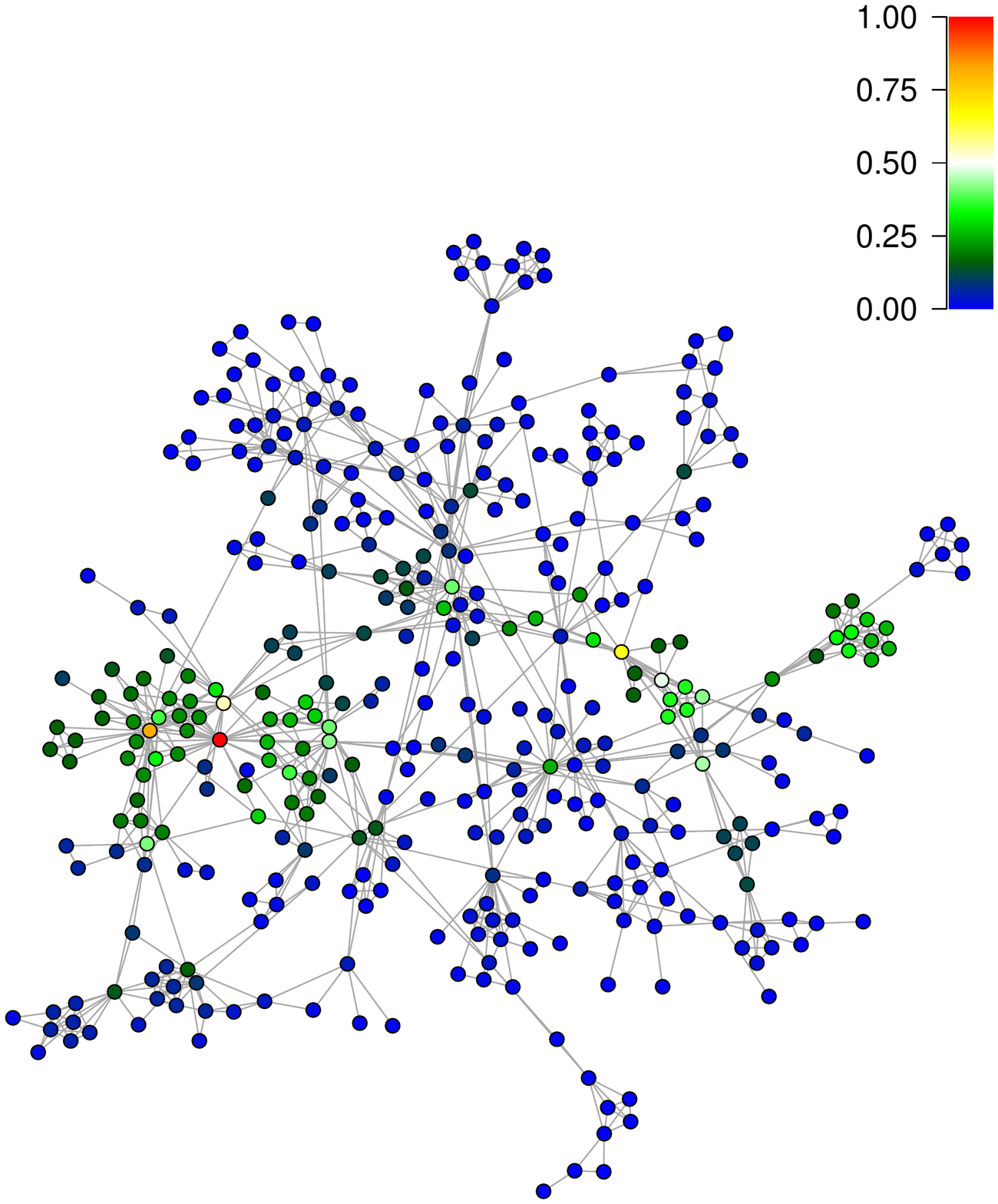}
\caption{}
\label{arba}%
\end{subfigure} 
\begin{subfigure}{3.4in}
\includegraphics[scale=.43]{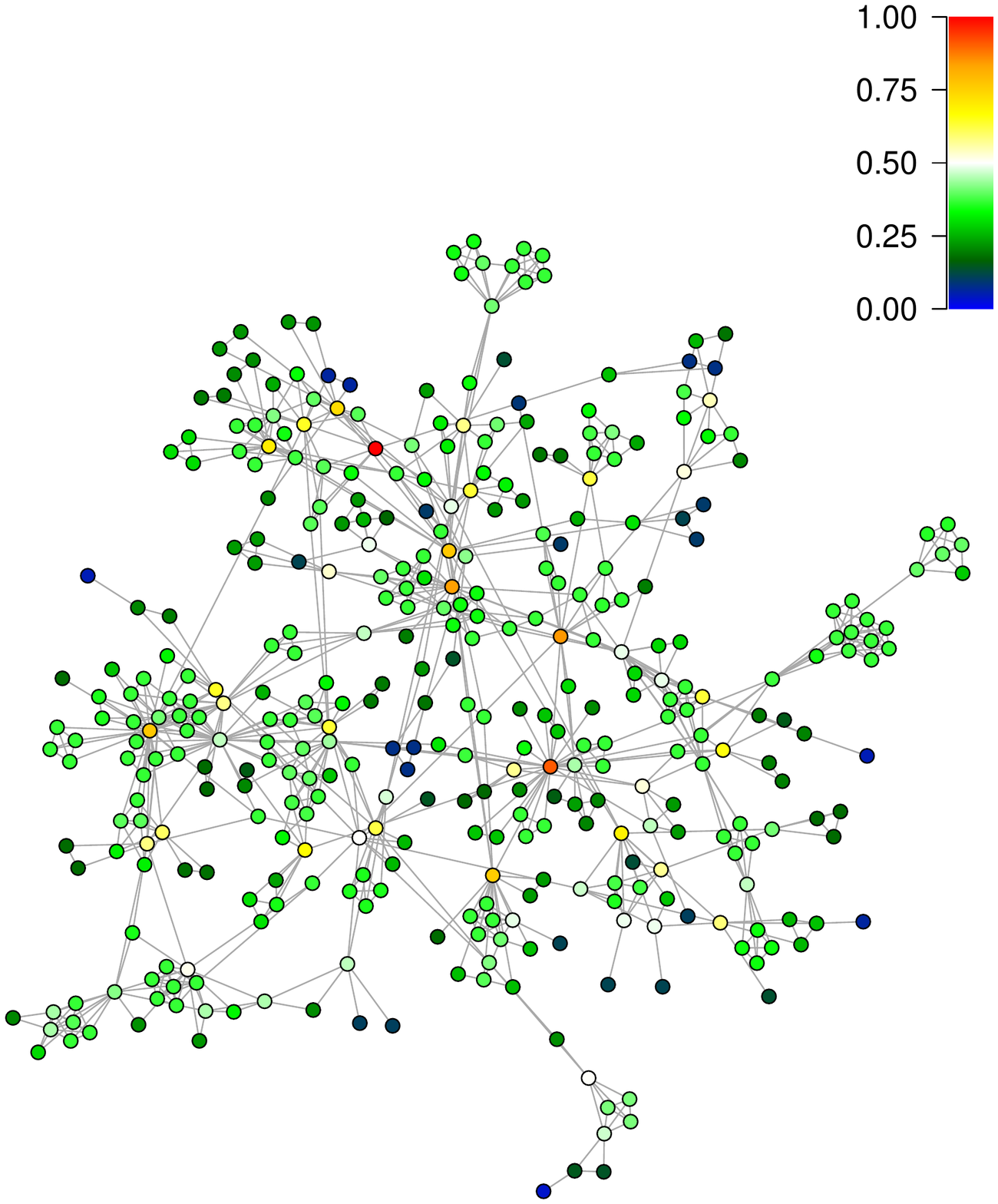}
\caption{}
\label{arbb}%
\end{subfigure}
\caption{(a) Optimized probability distribution for  $\beta/\delta=0.125$, showing only a few localized sites of the network have active nodes (b) Optimized probability distribution for $\beta/\delta=0.23$.}
\label{arb1}%
\end{figure}
\section{numerical result}
Considering the toy graph in Fig. \ref{motivatingexampleth.eps}, we generated a Line-Clique graph with 280 nodes in the line subgraph and 40 nodes in the clique subgraph. We used the CVX package \cite{cvx} --- a Matlab-based modeling system for deciplined convex optimization --- to compute the maximum entropy corresponding to any infection strength of interest. Fig. \ref{lcop} depicts the computed upper bound for $\tau>\frac{1}{\rho(G)}$. For $\beta/\delta>\frac{1}{2}$, the upper-bound values for the entropy of infection distribution in possible metastable is close to $\ln(N)$, which is indeed a trivial upper bound. However, for $\beta/\delta<\frac{1}{2}$, the upper-bound values are much smaller than the entropy of homogeneous distribution --- $\ln(N)$ --- indicating that if the epidemic reaches metastability, the infection will not spread to the whole nodes of the network and instead will be localized on a site of size $\sim(280+40)^{0.64}\approx 40$ nodes at most; which, interestingly, is the size of the clique part.       
Moreover, to show the relation between the computed upper bound and the true entropy of infection,  we performed Monte Carlo simulation of the SIS model over the Line-Clique graph using GEMFsim package \cite{GEMFsim} --- a Gillespie-based simulator for the generalized epidemic modeling framework in \cite{sahneh2013generalized}. For this simulation, we assumed an initial condition where only one node in the clique subgraph was infected. Fig. \ref{lcsim} shows the result of the simulation, where color represents the true dispersion entropy divided by $\ln(N)$ through time and as function of infection strength. For   $\beta/\delta<\frac{1}{2}$, the dispersion entropy of infection grows very fast and stays constant with values less than the computed upper bound in Fig. \ref{lcop}. In this case, epidemics does reach metastability, however, infections are localized on the clique subgraph.

As another example, we chose the largest component of a coauthorship network from \cite{newman2006finding} as shown in Fig. \ref{arb1}. For this network, the spectral radius of the adjacency matrix is $\sim10.4$. The entropy of the optimized distribution, shown in Fig. \ref{arbop}, is an upper bound for the metastable state of SIS model over the network. Even though for $0.13<\beta/\delta<0.15$ the optimized entropy has a large value, we cannot predict existence of a metastable state. In fact, the result of Monte Carlo simulation (Fig. \ref{arbsim}) for SIS model over the network shows the metastable state starts at much higher values of $\beta/\delta$ (larger than 0.2) where the optimized entropy is almost $\ln(N)$.
     
Moreover, as an illustration for the relationship between the dispersion entropy of a distribution and delocalization of the distribution, we have plotted the network and colored the nodes based on the value of its probability in the optimized distribution. In Fig. \ref{arba} and Fig. \ref{arbb} , the optimized distribution for two different values of $\beta/\delta$ is plotted. For  $\beta/\delta=0.125$, where the optimized entropy is small, the distribution is mainly localized on a few nodes. On the other hand, when $\beta/\delta$ increases to $0.23$, the entropy of the optimized distributions increases and more nodes get involved.

 \begin{figure}[h]
\begin{subfigure}{3.4in}
\includegraphics[scale=.4]{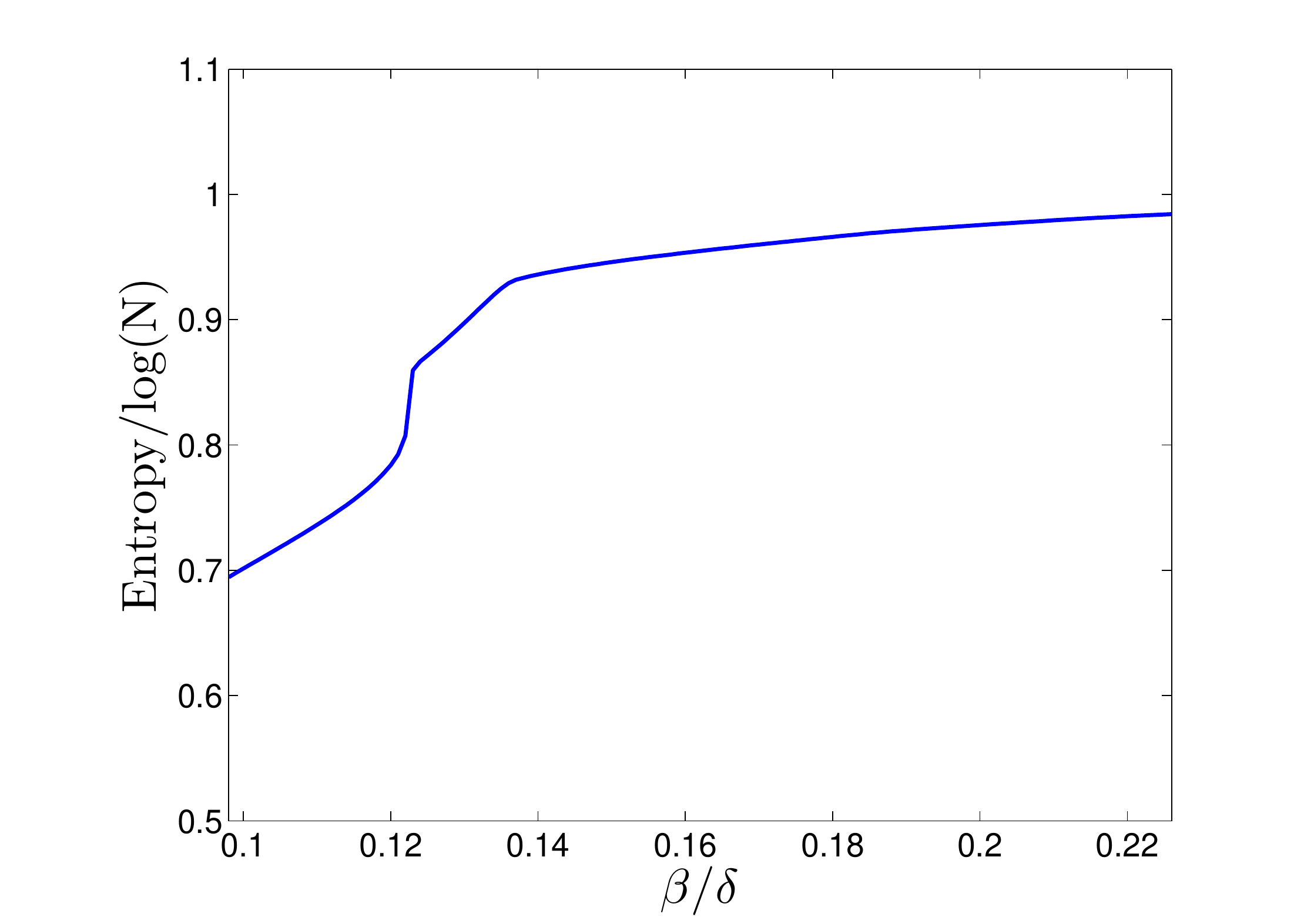}
\caption{}
\label{arbop}
\end{subfigure}
\begin{subfigure}{3.4in}
\includegraphics[width=3.8in]{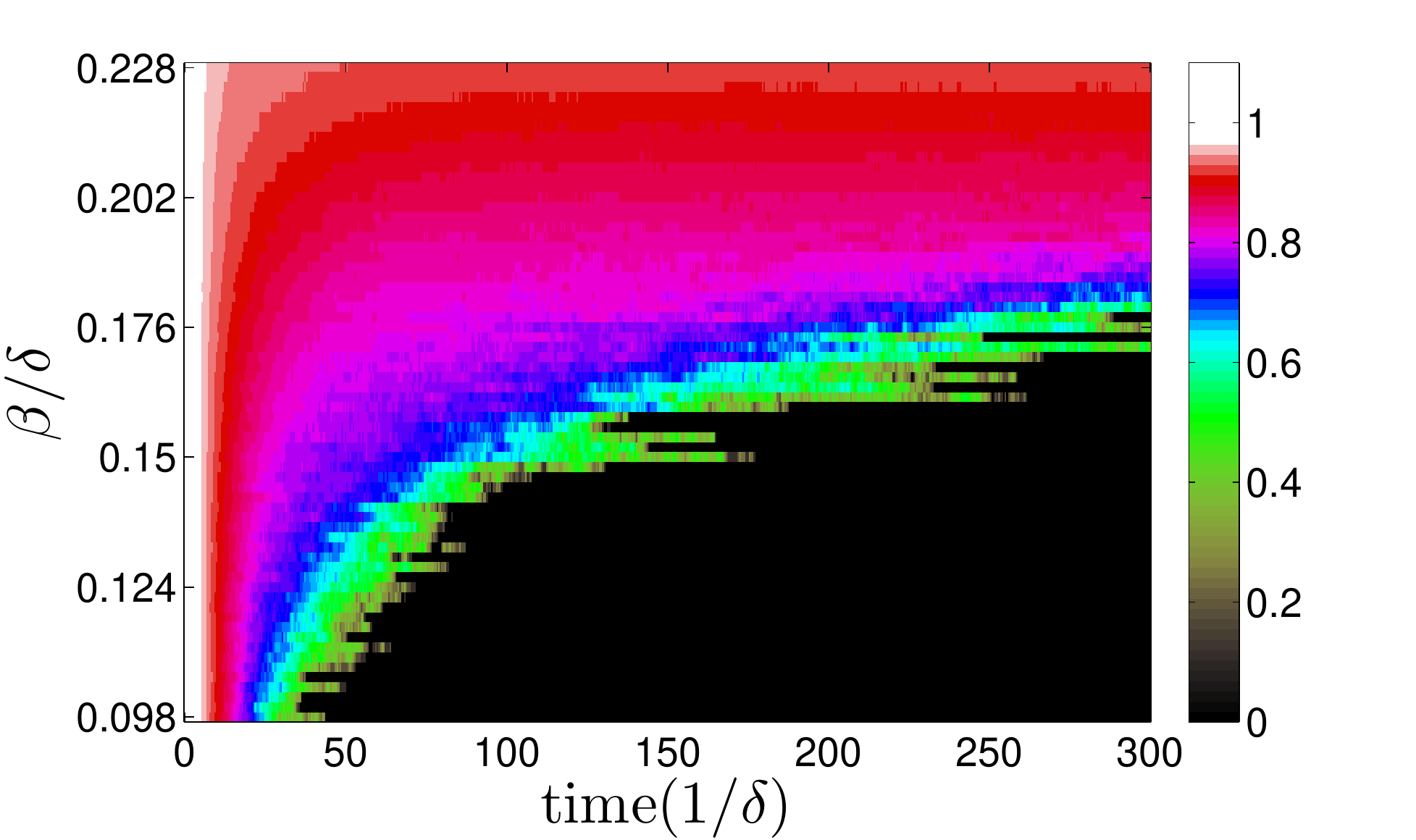}
\caption{}
\label{arbsim}
\end{subfigure}
\caption{(a) The entropy of the optimized distribution normalized by $\ln(N)$ for coauthorships network of Fig. \ref{arb1}. (b) Monte Carlo simulation for the SIS where all the nodes were initially infected. Color represents dispersion entropy of the infection probability distribution divided by $\ln(N)$.}
\label{arb2}%
\end{figure}

\section{conclusion}
In summary, we investigated the infection localization of SIS process. We used dispersion entropy defined in Eq. (\ref{a4}) as a measure of delocalization. We believe, in addition to infection size, measures such as dispersion entropy are relevant in epidemic spreading processes and should be included in numerical simulations. Moreover, we found an upper bound for the infection dispersion entropy when a metastable state exist. This upper bound, which depends on the infection strength, suggests the maximum number of nodes that can be active in a metastable state. A small upper bound for the dispersion entropy of a metastable state provides a sufficient condition for either complete extinction of infections or their localized persistence.

We would like to emphasize that studying the delocalization of infection using the endemic equilibrium point of the NIMFA model, i.e.,  the solution of Eq. (\ref{NIMFA_eq}), is not supported by rigorous arguments. This is mainly due to the fact that NIMFA equilibrium does not necessarily correspond to the actual metastable state for a generic graph. Our proposed upper bound obtained is larger than the entropy of NIMFA equilibrium point because the search space of the optimization includes the solution of Eq. \ref{NIMFA_eq}.

Finally, the presented optimization approach to the delocalization problem depends on a notion of the metastable state where the derivation of infection probability is close to zero. Since the disease-free state is an absorbing state for SIS process, this derivative is indeed very small yet negative. This necessitates further research investigating the sensitivity of the maximum enthropy problem to small perturbations of the search space boundary. Specifically, one should study the behavior of the optimal value $S^*$ subject to $(\tau A-I)\boldsymbol{p}+\epsilon \boldsymbol{r} \geq 0$ as $\epsilon \downarrow 0$ given a positive vector $\boldsymbol{r}$.
\bibliographystyle{IEEEtran}
\bibliography{Refrence}

\begin{thebibliography}{10}
\providecommand{\url}[1]{#1}
\csname url@samestyle\endcsname
\providecommand{\newblock}{\relax}
\providecommand{\bibinfo}[2]{#2}
\providecommand{\BIBentrySTDinterwordspacing}{\spaceskip=0pt\relax}
\providecommand{\BIBentryALTinterwordstretchfactor}{4}
\providecommand{\BIBentryALTinterwordspacing}{\spaceskip=\fontdimen2\font plus
\BIBentryALTinterwordstretchfactor\fontdimen3\font minus
  \fontdimen4\font\relax}
\providecommand{\BIBforeignlanguage}[2]{{%
\expandafter\ifx\csname l@#1\endcsname\relax
\typeout{** WARNING: IEEEtran.bst: No hyphenation pattern has been}%
\typeout{** loaded for the language `#1'. Using the pattern for}%
\typeout{** the default language instead.}%
\else
\language=\csname l@#1\endcsname
\fi
#2}}
\providecommand{\BIBdecl}{\relax}
\BIBdecl

\bibitem{wang2003epidemic}
Y.~Wang, D.~Chakrabarti, C.~Wang, and C.~Faloutsos, ``Epidemic spreading in
  real networks: An eigenvalue viewpoint,'' in \emph{Reliable Distributed
  Systems, 2003. Proceedings. 22nd International Symposium on}.\hskip 1em plus
  0.5em minus 0.4em\relax IEEE, 2003, pp. 25--34.

\bibitem{daley2001epidemic}
D.~J. Daley, J.~Gani, and J.~M. Gani, \emph{Epidemic modelling: an
  introduction}.\hskip 1em plus 0.5em minus 0.4em\relax Cambridge University
  Press, 2001.

\bibitem{kephart1991directed}
J.~O. Kephart and S.~R. White, ``Directed-graph epidemiological models of
  computer viruses,'' in \emph{Research in Security and Privacy, 1991.
  Proceedings., 1991 IEEE Computer Society Symposium on}.\hskip 1em plus 0.5em
  minus 0.4em\relax IEEE, 1991, pp. 343--359.

\bibitem{ganesh2005effect}
A.~Ganesh, L.~Massouli{\'e}, and D.~Towsley, ``The effect of network topology
  on the spread of epidemics,'' in \emph{INFOCOM 2005. 24th Annual Joint
  Conference of the IEEE Computer and Communications Societies. Proceedings
  IEEE}, vol.~2.\hskip 1em plus 0.5em minus 0.4em\relax IEEE, 2005, pp.
  1455--1466.

\bibitem{van2009virus}
P.~Van~Mieghem, J.~Omic, and R.~Kooij, ``Virus spread in networks,''
  \emph{Networking, IEEE/ACM Transactions on}, vol.~17, no.~1, pp. 1--14, 2009.

\bibitem{mountford2013metastable}
T.~Mountford, D.~Valesin, and Q.~Yao, ``Metastable densities for the contact
  process on power law random graphs,'' \emph{Electron. J. Probab}, vol.~18,
  no. 103, pp. 1--36, 2013.

\bibitem{goltsev2012localization}
A.~V. Goltsev, S.~N. Dorogovtsev, J.~Oliveira, and J.~F. Mendes, ``Localization
  and spreading of diseases in complex networks,'' \emph{Physical review
  letters}, vol. 109, no.~12, p. 128702, 2012.

\bibitem{boguna2013nature}
M.~Bogu{\~n}{\'a}, C.~Castellano, and R.~Pastor-Satorras, ``Nature of the
  epidemic threshold for the susceptible-infected-susceptible dynamics in
  networks,'' \emph{Physical review letters}, vol. 111, no.~6, p. 068701, 2013.

\bibitem{odor2013spectral}
G.~{\'O}dor, ``Spectral analysis and slow spreading dynamics on complex
  networks,'' \emph{Physical Review E}, vol.~88, no.~3, p. 032109, 2013.

\bibitem{kullback1951information}
S.~Kullback and R.~A. Leibler, ``On information and sufficiency,'' \emph{The
  annals of mathematical statistics}, pp. 79--86, 1951.

\bibitem{boyd2004convex}
S.~Boyd and L.~Vandenberghe, \emph{Convex optimization}.\hskip 1em plus 0.5em
  minus 0.4em\relax Cambridge university press, 2004.

\bibitem{newman2010networks}
M.~Newman, \emph{Networks: an introduction}.\hskip 1em plus 0.5em minus
  0.4em\relax Oxford University Press, 2010.

\bibitem{van2010graph}
P.~Van~Mieghem, \emph{Graph spectra for complex networks}.\hskip 1em plus 0.5em
  minus 0.4em\relax Cambridge University Press, 2010.

\bibitem{sahneh2013generalized}
F.~D. Sahneh, C.~Scoglio, and P.~Van~Mieghem, ``Generalized epidemic mean-field
  model for spreading processes over multilayer complex networks,''
  \emph{Networking, IEEE/ACM Transactions on}, vol.~21, no.~5, pp. 1609--1620,
  2013.

\bibitem{cator2012second}
E.~Cator and P.~Van~Mieghem, ``Second-order mean-field
  susceptible-infected-susceptible epidemic threshold,'' \emph{Physical review
  E}, vol.~85, no.~5, p. 056111, 2012.

\bibitem{van2011n}
P.~Van~Mieghem, ``The {N}-intertwined {SIS} epidemic network model,''
  \emph{Computing}, vol.~93, no. 2-4, pp. 147--169, 2011.

\bibitem{van2012n}
------, ``N-interwitned mean-field approximation (nimfa) versus exatc sis
  epidemics on networks,'' 2012.

\bibitem{van2012epidemic}
------, ``Epidemic phase transition of the sis type in networks,'' \emph{EPL
  (Europhysics Letters)}, vol.~97, no.~4, p. 48004, 2012.

\bibitem{li2012susceptible}
C.~Li, R.~van~de Bovenkamp, and P.~Van~Mieghem,
  ``Susceptible-infected-susceptible model: A comparison of {N}-intertwined and
  heterogeneous mean-field approximations,'' \emph{Physical Review E}, vol.~86,
  no.~2, p. 026116, 2012.

\bibitem{khanafer2014stability}
A.~Khanafer, T.~Basar, and B.~Gharesifard, ``Stability properties of infection
  diffusion dynamics over directed networks,'' in \emph{Decision and Control
  (CDC), 2014 IEEE 53rd Annual Conference on}.\hskip 1em plus 0.5em minus
  0.4em\relax IEEE, 2014, pp. 6215--6220.

\bibitem{van2013non}
P.~Van~Mieghem and R.~Van~de Bovenkamp, ``Non-markovian infection spread
  dramatically alters the susceptible-infected-susceptible epidemic threshold
  in networks,'' \emph{Physical review letters}, vol. 110, no.~10, p. 108701,
  2013.

\bibitem{cvx}
I.~CVX~Research, ``{CVX}: Matlab software for disciplined convex programming,
  version 2.0,'' \url{http://cvxr.com/cvx}, 2012.

\bibitem{GEMFsim}
F.~D. Sahneh, A.~Vajdi, H.~Shakeri, F.~Futing, and C.~Scoglio, ``{GEMFsim}: A
  stochastic simulator for the generalized epidemic modeling framework,''
  \emph{ar{X}iv}, 2016.

\bibitem{newman2006finding}
M.~E. Newman, ``Finding community structure in networks using the eigenvectors
  of matrices,'' \emph{Physical review E}, vol.~74, no.~3, p. 036104, 2006.

\end{thebibliography}

\end{document}